\documentclass{elsarticle}

\usepackage{epsfig}
\usepackage[T1]{fontenc}
\usepackage[latin1]{inputenc}
\usepackage[english]{babel}
\usepackage{times}
\usepackage{color}
\usepackage{subcaption}
\usepackage{amsthm,amsmath,enumerate}
\usepackage{amssymb}
\usepackage{amscd}
\usepackage{latexsym}
\usepackage{tabularx}
\usepackage{stmaryrd}
\usepackage{vmargin}
\usepackage{euscript}
\usepackage{xcolor}
\usepackage{a4}
\usepackage[colorlinks=true,citecolor=black,linkcolor=black,urlcolor=blue]{hyperref}

\newcommand{\arxiv}[1]{\href{http://arxiv.org/abs/#1}{\texttt{arXiv:#1}}}

\setmarginsrb{3.5cm}{2cm}{3.5cm}{2cm}{1cm}{1cm}{1cm}{1cm}

\newtheorem{theorem}{Theorem}

\newtheorem{lemma}[theorem]{Lemma}

\def\paren#1{\left( #1 \right)}
\def\acc#1{\left\{ #1 \right\}}

\def\ceil#1{\left\lceil #1 \right\rceil}

\renewcommand{\le}{\leqslant}
\renewcommand{\ge}{\geqslant}

\begin{document}
\begin{frontmatter}
\title{The complexity of partitioning into disjoint cliques and a triangle-free graph\tnoteref{t1}}
\tnotetext[t1]{To Nina (Pascal's daughter, born on Nov 1 2015) and Ilyas (Marin's son, born on Nov 6 2015)}
\author[LIRMM]{Marin Bougeret}
\author[LIRMM]{Pascal Ochem}
\address[LIRMM]{CNRS - LIRMM, Montpellier, France}

\begin{abstract}
Motivated by Chudnovsky's structure theorem of bull-free graphs,
Abu-Khzam, Feghali, and M\"uller have recently proved that deciding if a graph has a vertex partition
into disjoint cliques and a triangle-free graph is NP-complete for five graph classes.
The problem is trivial for the intersection of these five classes.
We prove that the problem is NP-complete for the intersection of two subsets of size four among the five classes.
We also show NP-completeness for other small classes, such as graphs with maximum degree 4 and line graphs.
\end{abstract}

\begin{keyword}
Graph coloring, NP-completeness.
\end{keyword}

\end{frontmatter}

\section{Introduction}
In this paper we consider the problem of recognizing graphs having a
vertex partition into disjoint cliques and a triangle-free graph.
We say that a graph is \emph{partitionable} if it has such a partition.
The vertices in the $P_3$-free part are colored blue and
the vertices in the $K_3$-free part are colored red. 
This problem is known to be NP-complete on general graphs~\cite{farrugia2004vertex}.
The NP-completeness on bull-free graphs was motivated by an open question
in~\cite{thomasse2013parameterized} (after Thm 2.1) about
the complexity of recognizing the class $\tau_1$ introduced by Chudnovsky~\cite{chudnovsky2012structure}
in her characterization of bull-free graphs.
Abu-Khzam, Feghali, and M\"uller~\cite{AFM15} have then investigated the complexity of deciding
whether a bull-free graph is partitionable. They have shown the following.
\begin{theorem}~\cite{AFM15}
\label{previous}
Recognizing partitionable graphs is NP-complete even when restricted to the following classes:
\begin{enumerate}[(1)]
 \item planar graphs,\label{n1}
 \item $K_4$-free graphs,\label{n2}
 \item bull-free graphs,\label{n3}
 \item $(C_5,\dots,C_t)$-free graphs (for any fixed $t$),\label{n4}
 \item perfect graphs.\label{n5}
\end{enumerate}
\end{theorem}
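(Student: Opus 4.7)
My plan is to establish each of the five NP-completeness statements by a separate polynomial-time reduction, starting from a well-chosen NP-hard source problem for each class. Membership in NP is immediate since any candidate partition can be checked in polynomial time by verifying that the blue subgraph is $P_3$-free and the red subgraph is $K_3$-free. Throughout, I would lean on the two structural primitives available: every induced $P_3$ contains at least one red vertex, and every triangle contains at least one blue vertex. These play the roles of the ``OR'' and ``AND'' building blocks of the gadgetry.

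The common scaffolding is variable--clause gadgetry reminiscent of standard hardness proofs for coloring-type problems. A \emph{variable gadget} is a small subgraph admitting exactly two valid partial partitions, which I would identify with the truth values of a variable via the colors of a pair of output vertices. A \emph{clause gadget}, attached to the literal outputs, is required to be completable if and only if the clause is satisfied. Variables are linked to clauses by \emph{wires} built from short induced paths of controlled parity that propagate the color constraint in a predictable way. The source problem can be chosen as a restricted satisfiability variant (\textsc{3-Sat}, \textsc{NAE-3-Sat}, \textsc{1-in-3-Sat}, or their planar versions) or as a coloring problem on a restricted class, depending on which specialization is needed.

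For the individual classes: for (1), I would start from \textsc{Planar 3-Sat} and realize the gadgets and wires inside a planar embedding of the variable--clause incidence graph. For (2), force every blue clique to be a triangle by construction and locally certify that no four mutually adjacent vertices appear. For (3), neutralize the bull by adding twin vertices at every triangle so that no pair of triangle vertices has distinct private pendant neighbors, or alternatively embed the gadgets inside the complement of a triangle-free graph. For (4), apply girth amplification: replace each edge by a long induced path of suitable parity, so that partial partitions extend uniquely along the wire and every induced cycle of length strictly between $4$ and $t+1$ is destroyed. For (5), restrict to a perfect subfamily (such as split graphs, chordal graphs, or line graphs of bipartite graphs) and either redesign the gadgets inside that family or reduce from a coloring-type problem known to be hard there.

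The main obstacles will be items (3), (4), and (5). Item (3) is delicate because the bull arises as soon as a triangle has two vertices each with a distinct private pendant neighbor, yet triangles are indispensable to the ``AND'' primitive, so blocking the bull without killing the gadget logic is the crux. Item (4) demands careful parity bookkeeping because the partition semantics must survive edge subdivision, and one must also ensure that no forbidden short odd cycle is created by the interaction of different gadgets. Item (5) is arguably the hardest conceptually, since perfect graphs admit strong polynomial-time algorithms for many neighboring problems, so the reduction must exploit the specific mismatch between partitionability and classical graph colorings. In all five cases, the bulk of the routine work lies in verifying soundness (a satisfying assignment yields a valid partition) and completeness (any valid partition encodes a satisfying assignment).
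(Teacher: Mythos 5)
This theorem is not proved in the paper at all: it is quoted from Abu-Khzam, Feghali, and M\"uller~\cite{AFM15}, and the paper's own contribution (Theorem~\ref{main}) is a strengthening proved by a different, fully worked-out construction. That said, the architecture you describe --- reduction from a planar, bounded-occurrence SAT variant, variable gadgets with two symmetric valid colorings, clause gadgets that fail exactly when all literals are false --- is indeed the shape of both the cited proof and the paper's own Section~\ref{sec:hard} (which uses \textsc{planar ($3,{}^\le 4$)-sat}, red/blue \emph{forcers}, $P_3$ or $K_3$ clause gadgets, and a parity labelling). So your plan is pointed in the right direction, but it is a plan, not a proof: every substantive difficulty is deferred rather than resolved, and the entire content of such a theorem lies in exhibiting concrete forcers and variable gadgets and verifying that they simultaneously enforce the truth-value semantics and avoid the forbidden subgraphs.

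Two of your proposed techniques would in fact fail as stated. First, for item~(\ref{n4}) you propose girth amplification by replacing each edge with a long induced path. Partitionability is not preserved under edge subdivision: subdividing an edge of a triangle destroys the triangle, which is precisely your ``AND'' primitive, and a subdivided graph is bipartite-ish enough that the reduction's logic collapses. The working alternative (used in this paper) is to make the \emph{gadgets themselves} long --- many copies of a small block arranged circularly --- so that every unintended cycle has length exceeding $t$, combined with a parity labelling of the large $2$-connected component to exclude odd holes. Second, for item~(\ref{n5}) you suggest restricting to split or chordal graphs; but the problem is polynomial-time solvable on chordal graphs (Section~\ref{sec:poly}), hence on split graphs, so no reduction can live there. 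Of your three candidate subfamilies only line graphs of bipartite graphs is viable, and exploiting it requires the nontrivial reformulation as an edge-partition problem (as in Lemmas~\ref{42} and~\ref{edge}). For item~(\ref{n3}), ``adding twins at every triangle'' is a reasonable instinct (false twins are used in the $h_4$ forcer), but you would still need to check that twin addition neither creates a bull elsewhere nor changes which partitions exist. Until the gadgets are written down and these verifications carried out, the proposal does not constitute a proof.
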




In Section~\ref{sec:hard}, we prove Theorem~\ref{main} which improves Theorem~\ref{previous}.
The classes~$h_1$ and~$h_2$ of theorem~\ref{main} show that the problem remains
NP-complete for the intersection of two subsets of size four among the five classes of
Theorem~\ref{previous} (graphs in the intersection of the five classes are partitionable).
The class~$h_1$ also answers the open question~\cite{AFM15} of the complexity of recognizing partitionable
Meyniel graphs, since every graph in~$h_1$ is a parity graph and parity graphs correspond to gem-free Meyniel graphs.
We also show NP-completeness for several other classes.
The clases~$h_3$ to~$h_9$ are motivated by the introduction of other natural forbidden
inducesd subgraphs (mainly $C_4$, $K_4^-$, and $K_{1,3}$) and/or restriction on the maximum degree.
The interesting feature of every class is briefly discussed at the end of its dedicated subsection.
We study the tightness of this result in Section~\ref{sec:poly} by considering
all the intersections between every two graph classes of Theorem~\ref{main}.

We use standard notations for graphs (see~\cite{isgci}), some of them are reminded in Figure~\ref{smallGraphs}.
\begin{figure}[htbp]
\begin{center}
\includegraphics[width=130mm]{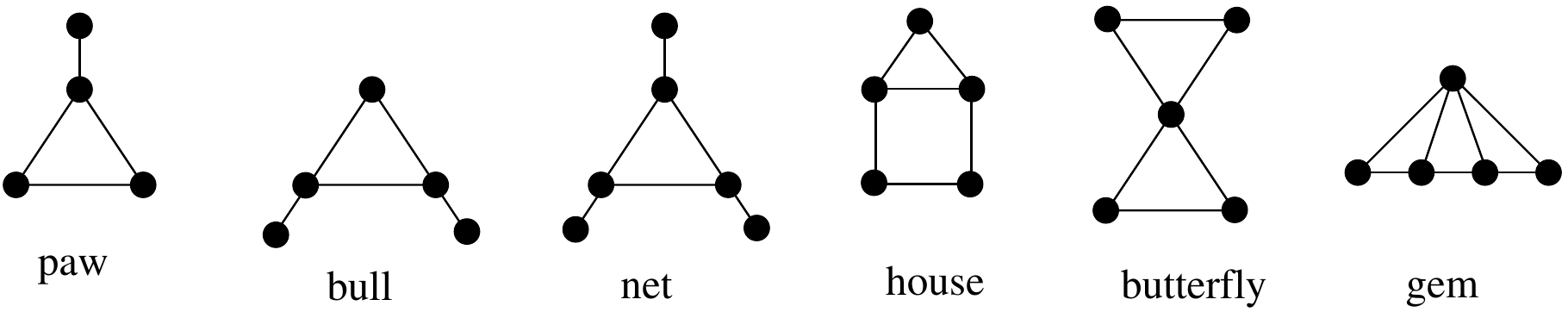}
\caption{Some small graphs and their name.}
\label{smallGraphs}
\end{center}
\end{figure}

A \emph{$k$-vertex} is a vertex of degree $k$.
Given a graph $G$, we denote its line graph by $L(G)$.
Given a graph class $\mathcal{C}$, we denote by $L(\mathcal{C})$
the set of line graphs of graphs in $\mathcal{C}$.


\section{Main result}\label{sec:hard}
In this section we prove the following result.
\begin{theorem}
\label{main}{\ }
Recognizing partitionable graphs is NP-complete even when restricted to the following classes:
\begin{enumerate}[$h_1$:]
 \item planar ($C_4,\dots,C_t$, bull, gem, odd hole)-free graphs with maximum degree 8,\label{a}
 \item planar ($K_4$, bull, house, $C_5,\dots,C_t$)-free graphs,\label{b}
 \item planar ($K_4,C_4$, gem, $C_7,\dots,C_t$, odd hole of length $\ge 7$)-free graphs with maximum degree 7,\label{g}
 \item ($K_4,C_5,\dots,C_t$, net, odd hole)-free graphs with maximum degree 8,\label{h}
 \item ($K_4^-$, butterfly, $C_6,\dots,C_t$)-free graphs with maximum degree 4,\label{c}
 \item ($K_4,K_4^-,C_4,\dots,C_t$, butterfly)-free graphs,\label{d}
 \item planar ($K_{1,3},K_4^-,C_4,\dots,C_t$, odd hole)-free graphs with maximum degree 6,\label{e}
 \item planar ($K_{1,3},K_4^-,C_9,\dots,C_t$, odd hole)-free graphs with maximum degree 5,\label{f}
 \item ($K_{1,3},K_4^-,C_4,\dots,C_t,K_5$, odd hole)-free graphs with maximum degree 5,\label{i}
\end{enumerate}
\end{theorem}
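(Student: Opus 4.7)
The plan is to prove, for each class $h_i$, NP-hardness by a polynomial reduction from a known NP-hard problem; membership in NP is immediate, since a candidate partition $(B,R)$ can be verified by checking that $B$ induces a disjoint union of cliques and $R$ induces a triangle-free graph. I would group the nine classes by reduction style: the planar classes $h_1, h_2, h_3, h_7, h_8$ call for reductions from planar 3-SAT or a planar monotone variant; the claw-free classes $h_7, h_8, h_9$ additionally admit an approach through line graphs, starting from an edge-coloring or edge-selection problem on a bounded-degree host graph; and the remaining classes $h_4, h_5, h_6$ can be handled by reductions from 3-SAT or NAE-SAT with ad hoc gadgets.

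Each reduction would rest on three standard components whose precise shape is dictated by the forbidden induced subgraphs and the maximum degree of the target class. First, a \emph{forcing gadget} attached to a designated port vertex and constraining it to be blue or red: forcing a vertex to be blue is easy, since any triangle forces one blue vertex and the remaining two can in turn be forced by iterated triangles, whereas forcing a vertex to be red requires a more delicate construction built around a $P_3$ whose endpoints are already blue. Second, a \emph{variable gadget} with exactly two locally consistent partitionable extensions, corresponding to TRUE and FALSE, and with enough output ports for all occurrences of the variable. Third, a \emph{clause gadget} that is partitionable iff at least one of its input ports carries TRUE. Wires of carefully chosen length and parity then propagate the truth value between variable and clause gadgets, without introducing forbidden short or odd substructures.

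Concretely, I would first design one reduction in detail for what looks to me like the most constrained planar class, $h_3$, where we must simultaneously forbid $K_4$, $C_4$, the gem, all odd holes of length at least $7$, and all $C_k$ for $7 \le k \le t$, while keeping the maximum degree at most $7$. That construction can then be adapted to the other planar classes by local surgery (removing or lengthening selected cycles, replacing a bull or house by a longer path, etc.), and to the claw-free classes by replacing each variable/clause gadget by its line-graph analogue, obtained from a small edge gadget in a cubic host graph. The class $h_5$ with maximum degree $4$ and a $K_4^-$-free constraint is the tightest among the non-planar cases and almost certainly demands a separate construction, since at most one triangle may pass through any given edge, so the blue-forcing gadget of the planar construction must be redesigned.

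The main obstacle throughout will be verifying that the output graph truly belongs to the prescribed class. Forcing and propagation gadgets naturally create many short or odd cycles and small induced subgraphs that could be isomorphic to bull, house, gem, butterfly or net; eliminating every such copy while preserving the functional behavior of each gadget is the delicate part. In particular, ruling out odd holes of every length (as in $h_1, h_3, h_4, h_7, h_8, h_9$) forces the global skeleton of the reduction to be essentially bipartite, which conflicts with the usual way a SAT reduction propagates truth values along odd paths; resolving this tension by careful parity arguments, and then checking exhaustively that no short forbidden subgraph is introduced at the junctions between gadgets, is where I expect most of the technical effort of the proof to lie.
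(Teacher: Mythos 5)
Your outline matches the paper's high-level strategy: a reduction from planar $(3,{}^{\le 4})$-\textsc{sat} with variable gadgets, clause gadgets and forcers, parity control to exclude odd holes, and a passage to line graphs of bipartite host graphs of maximum degree $4$ for the claw-free classes $h_7$--$h_9$ (where the paper reformulates everything as an edge-partition problem). But as written this is a plan, not a proof: every gadget is deferred, and for this theorem the gadgets \emph{are} the proof. Moreover, one of your few concrete claims would fail. You assert that forcing a vertex to be blue is easy ``by iterated triangles'': a triangle only guarantees that \emph{some} vertex of it is blue, not a designated one, and stacking triangles on a shared vertex or edge immediately creates a butterfly or a $K_4^-$, both forbidden in $h_5$, $h_6$ and $h_7$--$h_9$. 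The paper proceeds in the opposite order: it constructs \emph{red} forcers first (a specified vertex red in every partition) and then derives blue forcers by attaching red forcers so that the specified vertex could not be red without creating a forbidden monochromatic structure.

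The deeper gap is that your ``design one master construction and adapt it by local surgery'' plan cannot reach all nine classes. For $h_6$, a $(K_4,K_4^-,C_4,\dots,C_t,\text{butterfly})$-free graph has no short cycles other than pairwise vertex- and edge-disjoint triangles, so no bounded local gadget can force a colour at all; the paper instead proves \emph{non-constructively}, via random graphs $G_{n,p}$ with $p=n^{-1+1/2t}$, a long-path theorem, and an independence-number bound, that $h_6$ contains a non-partitionable graph, then takes an edge-minimal such graph and subdivides a suitable edge to obtain a red forcer. A similar external ingredient ($4$-regular cages of girth about $t/2$) is needed for the red edge forcer of $h_9$, and the blue forcer for $h_4$ is a specific construction from $\overline{C_7}$ with false twins. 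These are missing ideas rather than unfilled routine details: without them the reduction cannot be completed for several of the classes, so the proposal does not yet constitute a proof.
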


Kratochv\'il proved that \textsc{planar ($3,{}^\le 4$)-sat} is NP-complete~\cite{Kra94}.
In this restricted version of \textsc{sat}, the graph of variable-clause incidences
of the input formula must be planar, every clause is a disjonction of exactly
three literals, and every variable occurs in at most four clauses.
For every class considered in Theorem~\ref{main}, we provide a reduction from \textsc{planar ($3,{}^\le 4$)-sat}.
Given an instance formula $I$ of \textsc{planar ($3,{}^\le 4$)-sat}, we construct a graph $G$
such that $G$ is partitionable if and only if $I$ is satisfiable.

For the classes~$h_1$,~$h_2$,~$h_3$, and~$h_4$,
the boolean value true is associated to the color red,
the boolean value false is associated to the color blue, and the clause gadget is $P_3$.
This way, an unsatisfied clause corresponds to a blue $P_3$.
For the classes~$h_5$ and~$h_6$, the boolean value true is associated to the color blue,
the boolean value false is associated to the color red, and the clause gadget is $K_3$.
This way, an unsatisfied clause corresponds to a red $K_3$.
For brevity, we say that a vertex with the color associated to the boolean value true
(resp. false) is colored true (resp. false).

Given a variable $x$, a \emph{variable gadget} is a graph $G_x$ with
two disjoint subsets of vertices $S_x$ and $S_{\overline{x}}$ such that:
\begin{itemize}
\item There exists an involutive automorphism of $G_x$ which swaps $S_x$ and $S_{\overline{x}}$.
\item There exists a partition of $G_x$ such that every vertex in $S_x$ is colored true
and no blue vertex in $S_x\cup S_{\overline{x}}$ is adjacent to a blue vertex.
\item No partition of $G_x$ is such that both a vertex in $S_x$ and a vertex in $S_{\overline{x}}$
are colored true.
\end{itemize}

The variable gadget depends on the considered graph class and is built on \emph{forcers}.
A forcer is a partitionable graph with a specified vertex $q$.
\begin{itemize}
 \item A red forcer is such that $q$ is red in every partition.
 \item A blue forcer is such that $q$ is blue in every partition and there exists a partition
 such that every neighbor of $q$ is red.
\end{itemize}

We construct $G$ from $I$ as follows.
We take one copy of the variable gadget per variable.
We take one copy of the clause gadget (either $P_3$ or $K_3$) per clause.
Each of the 3 vertices of the clause gadget corresponds to a literal of the clause.
The vertices in $S_x\cup S_{\overline{x}}$ are depicted in green in the representation
of the variable gadgets (Fig.~\ref{var_abgh} and~\ref{var_cd}).
A subset of these green vertices corresponds to the literals of the variable $x$.
For every literal $\ell_x$ of $I$, one vertex corresponding to $\ell_x$
in $G_x$ is identified to the vertex corresponding to $\ell_x$ in the clause gadget.

For the classes~$h_1$,~$h_2$,~$h_3$, and~$h_4$,
we use a \emph{parity labelling} to ensure that $G$ has no induced odd hole.
This labelling assigns a value in $\acc{1,2}$ to every vertex of a subgraph of $G$,
such that the values alternate on every labelled induced path with at least 3 vertices.
In every clause gadget $P_3$ of $G$, the two extremities of $P_3$ are labelled $1$ and the middle
vertex is labelled $2$.
Notice that in the variable gadget in Fig.~\ref{var_abgh}, there exist green vertices
labelled $1$ and $2$ both in $S_x$ and $S_{\overline{x}}$.
Thus, we can make sure that every vertex of a clause gadget is identified to a green vertex corresponding
to the suitable literal having the same parity label. To check that $G$ contains no odd hole,
we consider the 2-connected components of $G$. They have a bounded number of vertices, except
one large 2-connected component which is entirely labelled and contains every clause gadget.

Only a part of the variable gadget is depicted in the figures. The actual size of the variable gadget
depends (linearly) on $t$. The variable gadget consists in sufficiently many copies of the depicted part
that are arranged circularly, as suggested by the dotted edge.
Thus, we can ensure that $G$ is ($C_i,\dots,C_t$)-free by identifying every green vertex to at most one
vertex of a clause a gadget and by requiring that the distance in the variable gadget between two "used" green
vertices is at least $t$. For subclasses of planar graphs, we also make sure to identify green vertices
and vertices of the clause gadget so that $G$ is planar.

Let us prove that this construction provides a reduction.
\begin{lemma}
$I$ is satisfiable if and only if $G$ is partitionable.
\end{lemma}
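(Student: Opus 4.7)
The plan is to establish the biconditional by two explicit constructions that use the three defining properties of the variable gadget together with the simple structure of the clause gadget (either $P_3$ or $K_3$, depending on the class).

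For the forward direction, I would start from a satisfying truth assignment of $I$ and, for each variable $x$, pick a partition of $G_x$ as follows: if $x$ is true, take the partition guaranteed by the second property of the variable gadget, in which every vertex of $S_x$ is colored true and no blue vertex of $S_x\cup S_{\overline{x}}$ has a blue neighbor inside $G_x$; if $x$ is false, compose this partition with the involutive automorphism from the first property, so that $S_{\overline{x}}$ plays the symmetric role. The colors propagate to each clause gadget through the identification of green vertices with literal-vertices. Validity inside each $G_x$ is automatic. Inside each clause gadget, the satisfying assumption supplies at least one true-colored literal-vertex, which rules out the all-false pattern (a blue $P_3$ for classes $h_1,\dots,h_4$, a red $K_3$ for classes $h_5,h_6$, and similarly for the remaining classes). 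What remains is to forbid a bad monochromatic-false configuration spanning several gadgets.

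For the backward direction, I would start from a valid partition of $G$ and read off a truth assignment: by the third property of the variable gadget, $S_x$ and $S_{\overline{x}}$ cannot both contain a true-colored vertex in the induced partition of $G_x$, so I can set $x$ to true when $S_x$ contains a true-colored vertex, to false when $S_{\overline{x}}$ does, and arbitrarily otherwise. For every clause, the partition forbids a blue $P_3$ (resp.\ a red $K_3$) inside its clause gadget, so at least one of the three literal-vertices is colored true; since that vertex lies in $S_\ell$ for some literal $\ell$ of the clause, the literal $\ell$ is true under the assignment and the clause is satisfied.

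The main obstacle is the cross-gadget verification in the forward direction. It relies on two structural facts about the construction: distinct variable gadgets share no vertex, and a clause gadget meets a variable gadget at exactly one green vertex. Combined with the last clause of property two, these confine the possible blue (resp.\ red) neighbors of a blue (resp.\ red) green vertex to the tiny clause gadget that contains it, where the satisfying-literal argument already rules out the forbidden pattern. Every other check reduces to inspection inside a single $G_x$ or inside a single three-vertex clause gadget.
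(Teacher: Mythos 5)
Your proposal is correct and follows essentially the same route as the paper's proof: the forward direction partitions each $G_x$ according to the truth value using the second property (and the involutive automorphism), then checks that no forbidden monochromatic configuration arises inside a clause gadget or across the single-vertex intersection of a clause gadget with a variable gadget; the backward direction reads off a consistent assignment via the third property and uses the clause gadget to certify satisfaction. The only difference is that you spell out the cross-gadget verification slightly more explicitly than the paper does.
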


\begin{proof}
Suppose that $I$ is satisfiable. For every variable $x$, if $x$ is set to true
(resp. false) in the satisfying assignment, then we partition $G_x$ such that
every vertex in $S_x$ (resp. $S_{\overline{x}}$) is colored true
and no blue vertex in $S_x\cup S_{\overline{x}}$ is adjacent to a blue vertex.
This implies that every clause gadget of $G$ contains a vertex colored true.
Moreover, there exists no blue $P_3$ across the intersection of variable gadget and an edge gadget.
So $G$ is partitionable.
Conversely, suppose that $G$ is partitionable and consider a partition of the vertices of $G$.
Thus, every clause gadget of $G$ contains a vertex colored true.
By the property of the variable gadget, if a vertex corresponding to a literal is colored true,
then every vertex corresponding to the opposite literal of the variable is colored false.
Every variable $x$ of $I$ is set to true if and only if there exists a vertex colored true in $S_x$.
By previous discussion, this gives a satisfying assignment of $I$.
\end{proof}

We leave to the reader to check that the variable gadgets satisfy the required properties and
that $G$ contains none of the induced subgraphs that the considered class forbids.

\subsection{Class $h_1$}
We use the blue forcer in Figure~\ref{blue_a} and the variable gadget in Figure~\ref{var_abgh}.
The class~$h_1$ is a subclass of parity graphs.

\begin{figure}[htbp]
\begin{subfigure}{.4\linewidth}
\centering
\includegraphics[width=55mm]{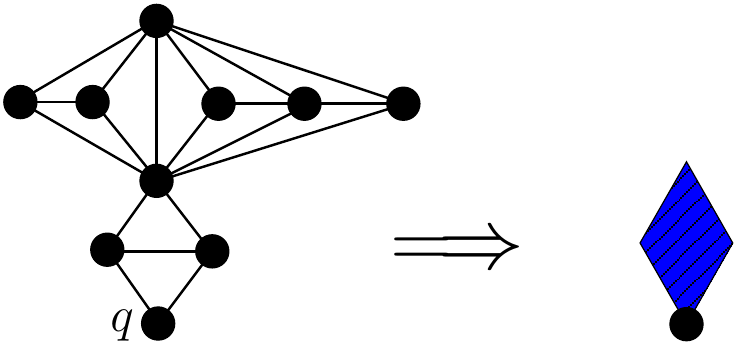}
\caption{The blue forcer for~$h_1$.}
\label{blue_a} 
\end{subfigure}\hfill
\begin{subfigure}{.5\linewidth}
\centering
\includegraphics[width=70mm]{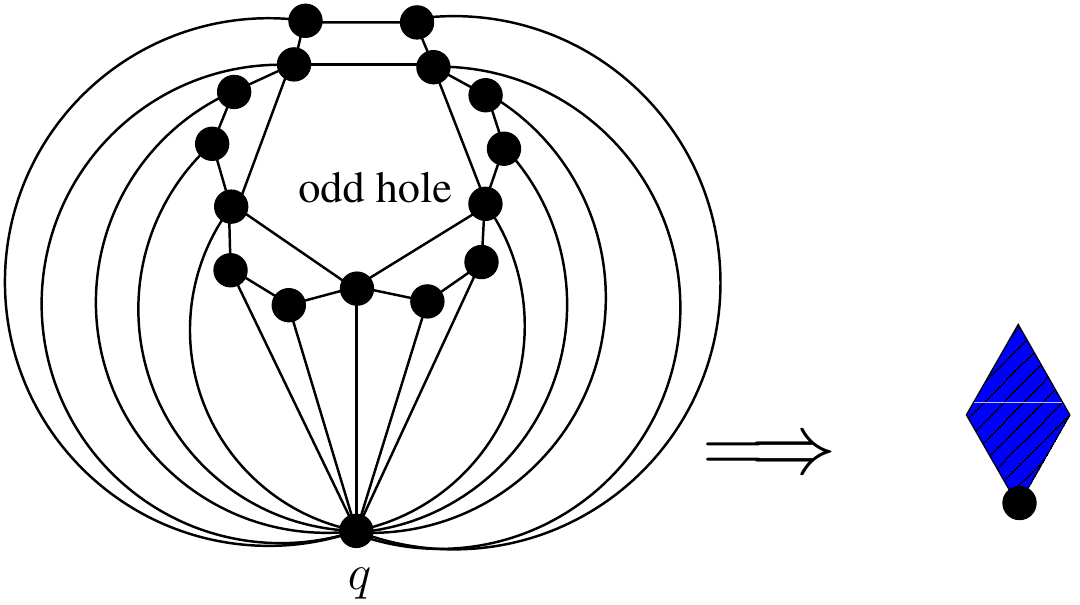}
\caption{The blue forcer for~$h_2$.}
\label{blue_b} 
\end{subfigure}
\caption{}
\end{figure}

\begin{figure}[htbp]
\begin{center}
\includegraphics[width=125mm]{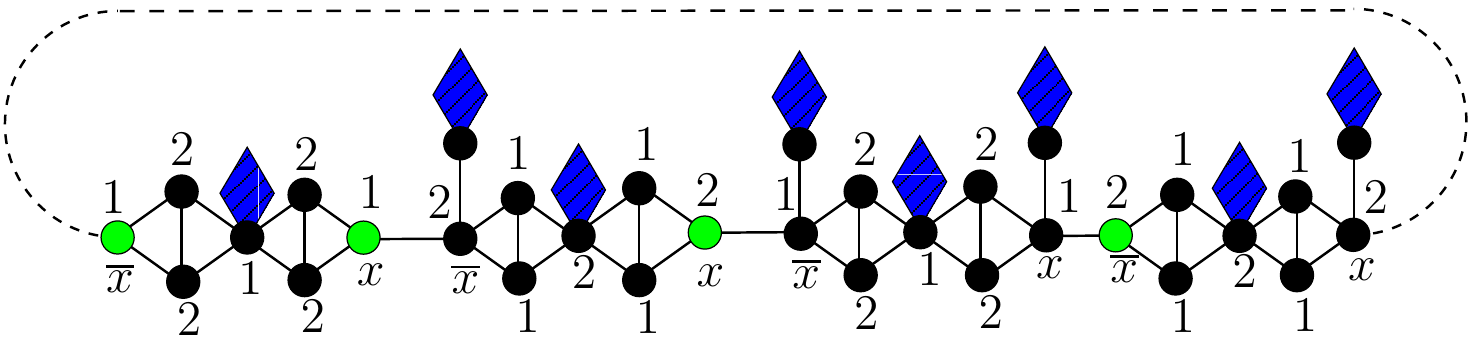}
\caption{The variable gadget for~$h_1$,~$h_2$,~$h_3$, and~$h_4$. The corresponding clause gadget is $P_3$.}
\label{var_abgh}
\end{center}
\end{figure}

\subsection{Class $h_2$}
We use the blue forcer in Figure~\ref{blue_b} and the variable gadget in Figure~\ref{var_abgh}.
We make sure that the length of the odd-hole in the blue forcer is at least $t+1$,
so that $G$ is ($C_5,\dots,C_t$)-free.
Notice that the maximum degree of $G$ is a linear function of $t$.
The class~$h_2$ is almost perfect and $K_4$-free (which would imply partitionable),
except that it contains large odd holes.

\subsection{Class $h_3$}
The graph $F$ on the left of Figure~\ref{red_g} admits no partition such that every
neighbor of $w$ is red. However, it has a partition such that $w$ is blue and a partition such that $w$ is red.
We use five copies of $F$ to obtain the red forcer, as depicted on the right of Figure~\ref{red_g}.
Suppose for contradiction that the red forcer admits a partition such that the specified vertex $q$ is blue.
By the property of $F$, at least one neighbor $r$ of $q$ is also blue.
Again, by the property of $F$, at least one of neighbor $s$ of $r$ in the copy of $F$ attached to $r$ is also blue.
Then $qrs$ is a blue induced $P_3$, which is a contradiction.

We obtain the blue forcer from the red forcer using Figure~\ref{red2blue}.
We use the variable gadget in Figure~\ref{var_abgh}.
Notice that $G$ only avoids odd holes of length at least 7 since the red forcer contains~$C_5$.
The class~$h_3$ is almost perfect and $K_4$-free (which would imply partitionable),
except that it contains~$C_5$.

\begin{figure}[htbp]
\begin{subfigure}{.7\linewidth}
\centering
\includegraphics[width=75mm]{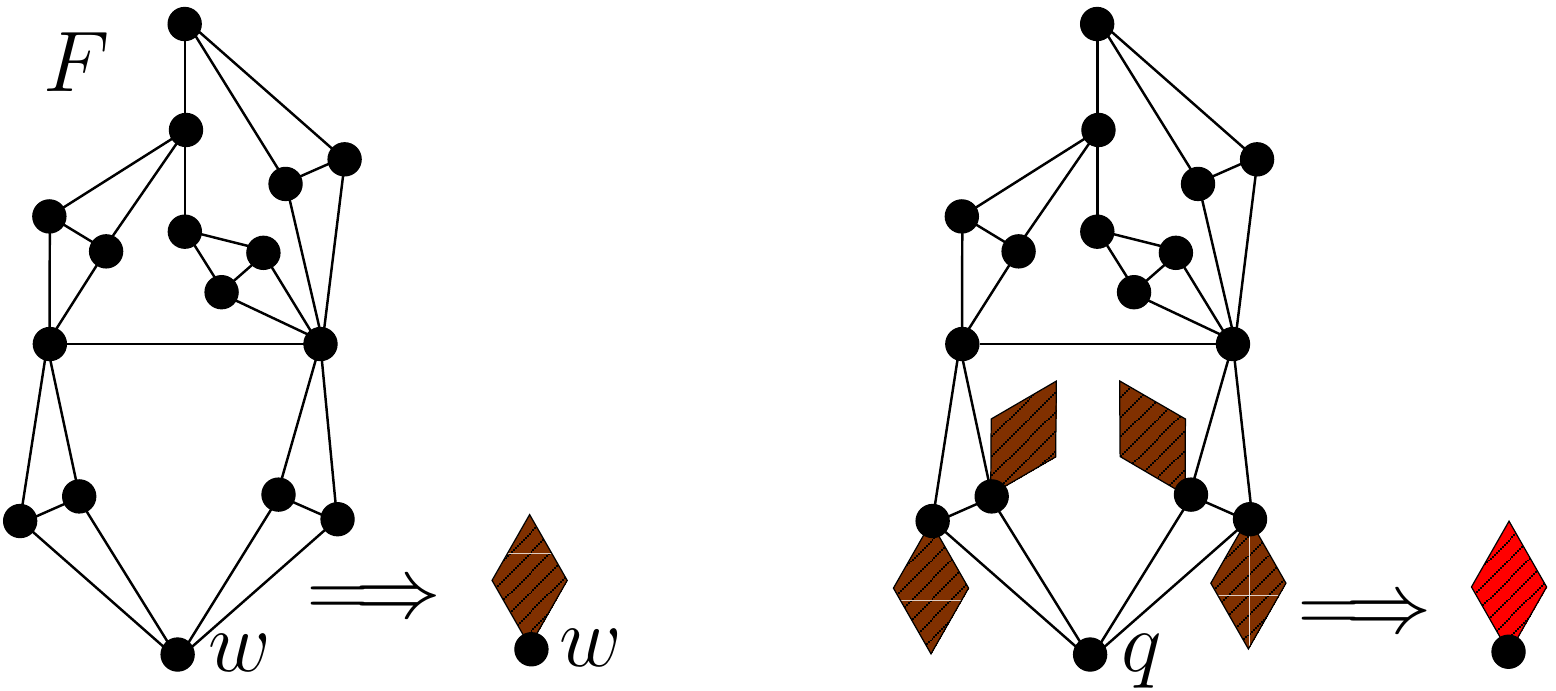}
\caption{The red forcer for~$h_3$.}
\label{red_g} 
\end{subfigure}\hfill
\begin{subfigure}{.3\linewidth}
\centering
\includegraphics[width=30mm]{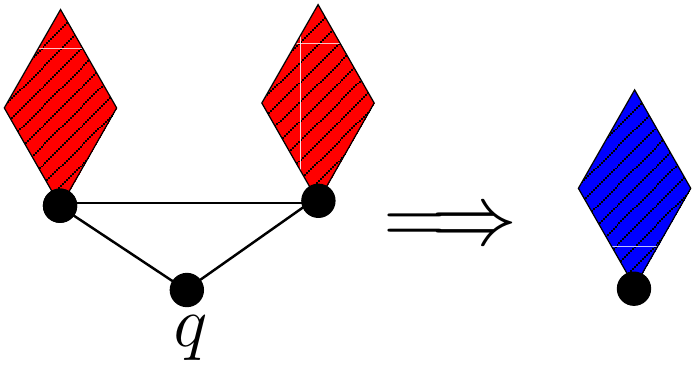}
\caption{Obtaining a blue forcer from a red forcer for~$h_3$,~$h_5$,~$h_6$,~$h_7$, and~$h_8$.}
\label{red2blue} 
\end{subfigure}\hfill
\caption{}
\end{figure}

\subsection{Class $h_4$}
We describe a blue forcer for~$h_4$.
Consider the graph $\overline{C_7}$ with vertices $v_0,\ldots,v_6$ such that $v_i$
is adjacent to $v_j$ unless $|i-j|\le1\pmod{7}$. Every partition of $\overline{C_7}$
is such that three consecutive vertices are blue and the other vertices are red.
So a partition is characterized by its monochromatic blue edge $v_tv_{t+2}$
(indices are taken modulo 7).
We add to $\overline{C_7}$ a false twin $v'_i$ to $v_i$ for $i\in\acc{3,4,6}$.
A vertex contained in a monochromatic blue edge cannot have a false, since it would create
a blue $P_3$. Thus, the monochromatic blue edge is either $v_5v_0$ or $v_0v_2$.
The blue forcer is obtained by adding a copy of $K_4^-$ and identifying a 2-vertex of $K_4^-$
with $v_0$. The specified vertex of the blue forcer is the 2-vertex
(the "other" 2-vertex of $K_4^-$).
Then we use the variable gadget in Figure~\ref{var_abgh}.
The class~$h_4$ is almost perfect and $K_4$-free (which would imply partitionable),
except that it contains $\overline{C_7}$.

\subsection{Class $h_5$}
We use the red forcer in Figure~\ref{red_c}.
We obtain a blue forcer for~$h_5$ from this red forcer using the construction of Figure~\ref{red2blue}.
We use the variable gadget in Figure~\ref{var_cd}.
Notice that degree 4 is best possible since graphs with maximum degree 3 are partitionable, as shown in the last section. 

\begin{figure}[htbp]
\begin{subfigure}{.5\linewidth}
\centering
\includegraphics[width=50mm]{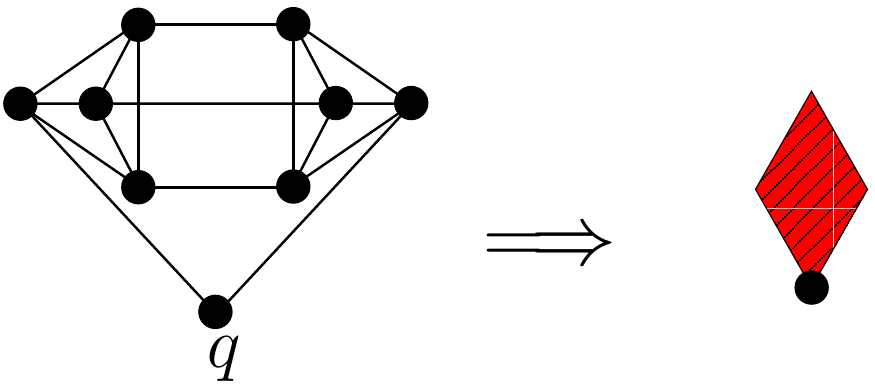}
\caption{The red forcer for~$h_5$.}
\label{red_c} 
\end{subfigure}\hfill
\begin{subfigure}{.5\linewidth}
\centering
\includegraphics[width=30mm]{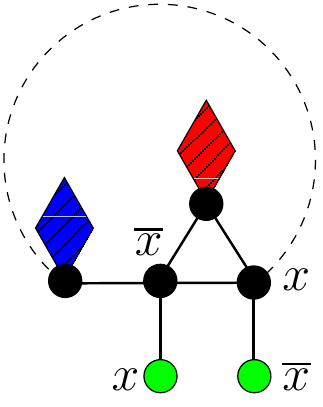}
\caption{The variable gadget for~$h_5$ and~$h_6$. The corresponding clause gadget is $K_3$.}
\label{var_cd}
\end{subfigure}
\caption{}
\end{figure}


\subsection{Class $h_6$}
To obtain a red forcer for~$h_6$, we first prove that $h_6$ is not a subclass of partitionable graphs.
By probabilistic arguments, there exists a graph $J_1$ among the random graphs $G_{n,p}$ with probability $p=n^{-1+1/2t}$ such that:
\begin{itemize}
 \item $J_1$ contains at most $\tfrac{n}{100}$ cycles of length at most $2t$.
 \item $J_1$ contains a path of length at least $\tfrac{99}{100}n$ (Theorem 8.1 in~\cite{random}).
 \item $J_1$ contains no independent set of size $\tfrac{n}7$.
\end{itemize}
We call \emph{bag} a subset of three vertices that induce either a $P_3$ or a $K_3$.
Let $J_2$ be the graph induced by the $\tfrac{99}{100}n$ of the mentioned path in $J_1$.
Thus, $J_2$ can be split into $\tfrac{33}{100}n$ bags.
Notice that we can destroy every cycle of length at most $2t$ in $J_1$ by removing at most $\tfrac{n}{100}$ vertices.
For every such vertex $v$, we remove from $J_2$ the bag containing $v$.
This way, we obtain a graph $J_3$ with girth at least $2t+1$
and having at least $\tfrac{33}{100}n-\tfrac{n}{100}=\tfrac8{25}n$ bags.
We obtain the graph $J_3$ by adding to $J_2$ the edge between the extremities
of each of the $\tfrac8{25}n$ copies of $P_3$ contained in a bag.
Every vertex in $J_3$ is contained in exactly one triangle.
Also, every cycle of length at least 4 in $J_3$ has length at least $\tfrac23(2t+1)\ge t+1$.
This implies that $J_3\in h_6$.
Notice that neither $J_1$ nor $J_3$ contains an independent set of size $\tfrac{n}7$.
Suppose for contradiction that $J_3$ is partitionable.
Without loss of generality, every triangle contains exactly one blue vertex.
So, the graph induced by the blue vertices is $(P_3,K_3)$-free and is thus bipartite.
Thus, there exists an induced blue independent set of size at least $\tfrac12\times\tfrac8{25}n>\tfrac{n}7$.
This contradiction shows that $J_3$ is a graph in~$h_6$ that is not partitionable.

Consider a graph $J_{min}$ in~$h_6$ that is not partitionable and is minimal with respect to the
number of edges. The red forcer is obtained from $J_{min}$ by subdividing once an edge $uv$ that is not contained in a triangle.
The specified vertex is the subdivision vertex.
By minimality, the graph $J_{min}\setminus\acc{uv}$ is partitionable and both $u$ and $v$ are blue in every partition.
Then the specified vertex must be red in order to avoid a blue $P_3$.

We obtain a blue forcer for~$h_6$ from this red forcer using the construction of Figure~\ref{red2blue}.
We use the variable gadget in Figure~\ref{var_cd}.

The class~$h_6$ is interesting because it avoids cycles of length $4$ to $t$ as (not necessarily induced) subgraphs.
Moreover, every vertex is contained in at most one triangle since it is butterfly-free.

\subsection{Classes $h_7$, $h_8$, and $h_9$}
The classes~$h_7$,~$h_8$, and~$h_9$ are subclasses of
($K_{1,3},K_4^-,K_5$, odd hole)-free graphs, which correspond
to the class of line graphs of bipartite graphs with maximum degree 4.
For convenience, we thus consider the corresponding edge-partitioning problem described as follows.
A triangle-free graph is \emph{edge-partitionable} in red and blue if
the blue edges induce a star forest
and every vertex is incident to at most two red edges.
We then have that a triangle-free graph $G$ is edge-partitionable if and only if $L(G)$ is partitionable.

\begin{lemma}
\label{42}
If a graph $G$ is such that every edge is incident to a 4-vertex and a 2-vertex,
then $G$ is edge-partitionable and every edge-partition of $G$ is such that every 4-vertex
is incident to exactly two blue edges and every 2-vertex is incident to exactly one blue edge.
\end{lemma}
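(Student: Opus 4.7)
The plan is to first use the star-forest constraint together with a handshake argument to determine the forced blue-degree at every vertex, and then to exhibit such a coloring via an Eulerian orientation of an auxiliary multigraph.

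For the uniqueness direction, since every edge joins a 2-vertex to a 4-vertex, the graph $G$ is bipartite with parts $V_2$ and $V_4$ satisfying $|V_2|=2|V_4|$. Let $b(x)$ denote the blue-degree of $x$ in a given edge-partition. The "at most two red edges per vertex" rule forces $b(v)\ge 2$ at every 4-vertex $v$. The star-forest condition implies that any vertex of blue-degree at least $2$ is the center of its blue component, so its blue neighbors are leaves and thus have blue-degree exactly~$1$. Applying this to 4-vertices shows $b(u)\le 1$ at every 2-vertex $u$: if $b(u)=2$, then $u$ would be a center and its two 4-vertex neighbors would be leaves of blue-degree~$1$, contradicting $b(v)\ge 2$. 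Counting blue edges from both sides of the bipartition,
\[
2|V_4| \;\le\; \sum_{v\in V_4} b(v) \;=\; |E_{\text{blue}}| \;=\; \sum_{u\in V_2} b(u) \;\le\; |V_2| \;=\; 2|V_4|,
\]
so equality holds throughout, giving $b(v)=2$ for every 4-vertex and $b(u)=1$ for every 2-vertex.

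For the existence direction, I would construct an auxiliary multigraph $H$ on vertex set $V_4$ by replacing each 2-vertex $u$ with neighbors $v_1,v_2$ by an edge $v_1v_2$ of $H$. Since $G$ is simple, $v_1\ne v_2$, so $H$ has no loops (parallel edges are harmless). Every vertex of $H$ has degree~$4$, so each component of $H$ is Eulerian and admits an orientation with in-degree equal to out-degree equal to~$2$ at every vertex. For each 2-vertex $u$ whose corresponding $H$-edge is oriented $v_1\to v_2$, color $uv_2$ blue and $uv_1$ red. Then every 2-vertex has exactly one blue edge and every 4-vertex has exactly two. In the blue subgraph, every 4-vertex has degree~$2$ with both neighbors (2-vertices) of degree~$1$, so each blue-component containing a 4-vertex is a $K_{1,2}$ centered at that vertex and the remaining 2-vertices are isolated. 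This is a star forest, and every vertex has at most two red edges, so the coloring is a valid edge-partition.

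I expect the main subtlety to be the uniqueness step, specifically ruling out $b(u)=2$ for 2-vertices by a careful use of the star-forest structure; the tight numerical equalities then fall out of the two-sided handshake. The existence part is essentially forced by the degree pattern already established, and the Eulerian-orientation construction on $H$ is standard once the right multigraph has been identified.
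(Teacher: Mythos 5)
Your proof is correct and follows essentially the same route as the paper: the uniqueness part is the same two-sided count (4-vertices force at least two blue edges, the star-forest condition caps 2-vertices at one blue edge, and the handshake forces equality), and the existence part uses the same construction of suppressing the 2-vertices to get a 4-regular multigraph and taking an Eulerian orientation. The only cosmetic difference is that you rule out a blue-degree-2 2-vertex via the center/leaf structure of stars rather than by exhibiting a long blue path, which amounts to the same observation.
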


\begin{proof}
In every edge-partition of $G$, every 4-vertex is incident to at least two blue edges.
So, at least half of the edges of $G$ are blue. Moreover, every 2-vertex is incident
to at most one blue edge, since otherwise there would be a blue path on 4 edges.
So, at most half of the edges of $G$ are blue.
Thus, there are equally many red and blue edges in every edge-partition of $G$.
Moreover, every 4-vertex is incident to exactly two blue edges and every 2-vertex is incident to exactly one blue edge.

Now we show that $G$ is edge-partitionable.
We define the contraction of a 2-vertex $v$ adjacent to $u_1$ and $u_2$
as the deletion of $v$ and the addition of one (additional) edge $u_1u_2$.
Let $G'$ be the multigraph obtained by contracting every 2-vertex of $G$.
Since $G'$ is 4-regular, we can orient the egdes of $G'$ such that the out-degree of every vertex is 2.
We extend this orientation of $G'$ to $G$ such that the incidences of the arcs to the 4-vertices are unchanged.
We assign red (resp. blue) to an edge of $G$ if its tail (resp. head)
is incident to a 4-vertex. This gives a valid egde-partition of $G$ since every vertex
is incident to at most two red edges and the graph induced by the blue edges
is a star forest such that the 4-vertices are the centers of the stars.
\end{proof}

\begin{lemma}
\label{edge}
Recognizing edge-partitionable bipartite graphs with maximum degree 4 is NP-complete even when restricted to:
\begin{itemize}
 \item[$h'_7$:] planar ($C_3,\dots,C_t$)-free graphs such that every 4-vertex is a cut vertex.
 \item[$h'_8$:] planar ($C_9,\dots,C_t$)-free graphs such that no edge is incident to two 4-vertices
 and such that every 4-vertex is a cut vertex.
 \item[$h'_9$:] ($C_3,\dots,C_t$)-free graphs such that no edge is incident to two 4-vertices.
\end{itemize}
\end{lemma}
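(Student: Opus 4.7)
The plan is to prove Lemma~\ref{edge} by giving, for each of $h'_7$, $h'_8$, and $h'_9$, a polynomial-time reduction from \textsc{planar ($3,{}^\le 4$)-sat}, in the same style as the reductions in Section~\ref{sec:hard}. From a formula $I$ one builds a bipartite graph $G$ of maximum degree~$4$ that is edge-partitionable if and only if $I$ is satisfiable; this directly yields NP-hardness, while membership in NP is routine.

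The clause gadget is a star $K_{1,3}$ whose centre is a $3$-vertex: it can be incident to at most two red edges, so at least one of its three edges must be blue. The three edges correspond to the three literals, and an unsatisfied clause is exactly one whose centre ends up with three red edges. The variable gadget is built from a forcer provided by Lemma~\ref{42}: for every variable $x$, I would form a long closed chain $G_x$ whose vertices alternate between $4$-vertices and $2$-vertices, each $4$-vertex having two chain neighbours and two further $2$-neighbours attached outside the chain. Lemma~\ref{42} forces every $4$-vertex to carry exactly two blue edges and every $2$-vertex exactly one; a short case analysis around a single $4$-vertex then shows that either both of its chain edges are blue or both of its outer edges are blue, and that this choice propagates coherently around the chain. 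Hence $G_x$ admits exactly two canonical edge-partitions, which play the role of $x=\text{true}$ and $x=\text{false}$; the $S_x,S_{\overline{x}}$ splitting of the outer $2$-vertices is read off from which edge-partition leaves their unique blue edge inside the chain, and a rotation of order two of the chain exchanges $S_x$ and $S_{\overline{x}}$. Identifying literal vertices with leaves of the clause stars completes $G$.

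It then remains to adapt the chain to each class. For $h'_8$ and $h'_9$ the condition that no edge joins two $4$-vertices is already enforced by the alternating $4{-}2{-}4{-}2$ structure of the chain and by attaching the clause stars only at $2$-vertices. For $h'_7$ and $h'_8$, I would make every $4$-vertex a cut vertex by attaching to one of its outer $2$-neighbours a small pendant path made of $2$-vertices and ending at a degree-$1$ leaf; such pendants are handled by Lemma~\ref{42} in the same rigid way. The girth restrictions ($C_3,\dots,C_t$-free for $h'_7,h'_9$ and $C_9,\dots,C_t$-free for $h'_8$) are obtained by taking the chain and the pendants long enough and by requiring a sufficient distance inside $G_x$ between any two literal vertices used by distinct clauses, as in the earlier subsections; bipartiteness comes for free because the chains are even cycles made of $2$-paths. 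Planarity for $h'_7$ and $h'_8$ is achieved by routing the chains along a planar embedding of the variable-clause incidence graph of $I$.

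The main difficulty will be the $h'_7$ case, where triangle-freeness, planarity, arbitrarily large girth, cut-vertex status at every $4$-vertex, and maximum degree $4$ all have to coexist without breaking the rigidity provided by Lemma~\ref{42}. Designing the outer attachments of each $4$-vertex so that Lemma~\ref{42} still pins down the partition while keeping the cut-vertex and large-girth conditions simultaneously satisfiable is the step that I expect to require the most care; the other two classes should then come out as easier variants of the same construction.
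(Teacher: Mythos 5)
Your overall strategy (a reduction from \textsc{planar ($3,{}^\le 4$)-sat} with variable chains and clause gadgets, using Lemma~\ref{42} as the source of rigidity) matches the paper's, but the specific gadgets you describe do not deliver the rigidity you claim, and that is where the actual content of the lemma lies. First, the dichotomy you assert at each $4$-vertex of the chain --- ``either both of its chain edges are blue or both of its outer edges are blue'' --- does not follow from Lemma~\ref{42}: a $4$-vertex incident to exactly two blue edges can have one blue chain edge and one blue outer edge (the two blue edges still form a star centred at that $4$-vertex), and this mixed configuration propagates around the chain without contradiction. So the chain does not have only two canonical edge-partitions, and the $S_x/S_{\overline{x}}$ mechanism collapses. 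Second, Lemma~\ref{42} applies only to graphs in which \emph{every} edge joins a $4$-vertex to a $2$-vertex, and its conclusion comes from a global counting argument that does not survive embedding into a larger graph: the moment you attach a pendant path of $2$-vertices ending in a leaf (to make $4$-vertices cut vertices for $h'_7$ and $h'_8$), or identify a chain vertex with a leaf of a $K_{1,3}$ clause star (turning it into a $3$-vertex), the hypothesis fails for the composite graph, and a $4$-vertex may then carry three blue edges while a $2$-vertex carries none. Third, your clause gadget guarantees at least one blue edge at its $3$-vertex centre, but you provide no mechanism transmitting ``this literal edge is blue'' back into the variable gadget so as to force the complementary literal edges red; without that, satisfiability of $I$ and edge-partitionability of $G$ are not equivalent.

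The paper resolves exactly these difficulties by building \emph{edge forcers}: subgraphs (Figure~\ref{red_gghh} for $h'_7$ and $h'_8$; for $h'_9$, a subdivided $4$-regular cage of large girth) that satisfy the hypothesis of Lemma~\ref{42} internally and interface with the rest of the construction through a single designated pendant edge, which is therefore red in every edge-partition; two red forcers combine into a blue forcer (Figure~\ref{r2bl}). The variable and clause gadgets (Figures~\ref{var_l} and~\ref{clause_l}) are assembled from these forcers, and in particular a blue forcer placed inside the clause gadget is what guarantees that a blue literal edge cannot be incident to a second blue edge in the variable gadget --- precisely the consistency step missing from your construction. You would need to design such forcers for each of the three classes (checking girth, planarity, the degree bound, and the cut-vertex condition in each case) before your reduction can be completed.
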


\begin{proof}
For every class, we give a red edge forcer.
The red edge forcers for $h'_7$ and $h'_8$ are depicted in Figure~\ref{red_gghh}.
Let us construct a red edge forcer for $h'_9$. Consider a cage $J$ with degree 4 and girth $\ceil{\tfrac{t+1}2}$
(see~\cite{EJ} for more information on cages) and subdivide every edge once to obtain the bipartite graph~$J'$.
We obtain a red edge forcer for $h'_9$ by adding a 1-vertex adjacent to one of the 2-vertices of $J'$.
By Lemma~\ref{42}, the added edge is necessarily red.

\begin{figure}[htbp]
\begin{subfigure}{.4\linewidth}
\centering
\includegraphics[width=40mm]{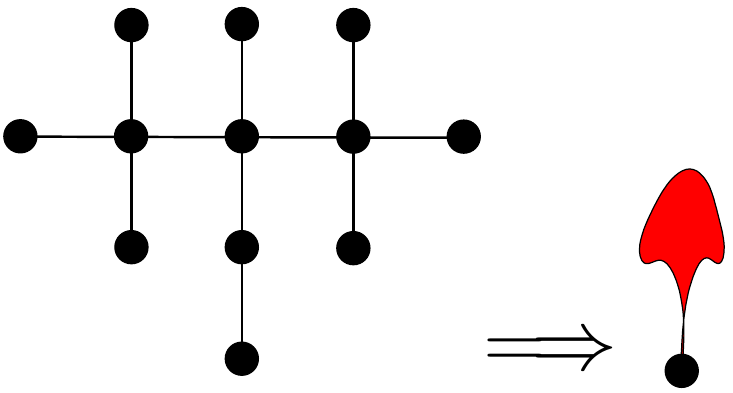}
\caption{The red edge forcer for $h'_7$.}
\label{red_gg}
\end{subfigure}\hfill
\begin{subfigure}{.6\linewidth}
\centering
\includegraphics[width=40mm]{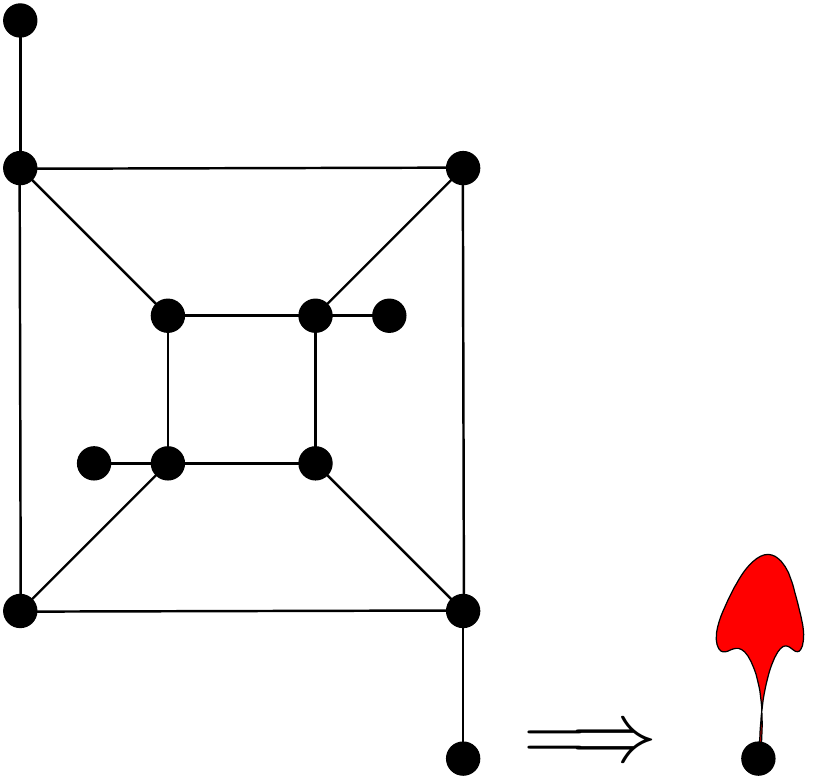}
\caption{The red edge forcer for $h'_8$.}
\label{red_hh}
\end{subfigure}
\caption{}
\label{red_gghh}
\end{figure}

For every class, we obtain a blue edge forcer from two copies of the red edge forcer of the class,
as depicted in Figure~\ref{r2bl}.
This is the counterpart for edge-partition of the construction in Figure~\ref{red2blue}.
These red and blue edge forcers are used in the construction of the variable gadget (Fig.~\ref{var_l})
and the clause gadget (Fig.~\ref{clause_l}).
To obtain the instance graph of the edge-partition problem, we identify every green edge corresponding to a
literal in the variable gadget to the green edge corresponding to this literal in the clause gadget.
The boolean value false is associated to the color red, so that the clause gadget is not edge-partitionable
if the clause is not satisfied. The blue forcer in the clause gadget implies that if the edge
corresponding to a literal is colored blue, then this edge is not incident to another blue edge
in the variable gadget. So we can check that if a green edge marked $x$ in the variable gadget is colored blue,
then every edge marked $x$ must be blue and every edge marked $\overline{x}$ must be red. 

\begin{figure}[htbp]
\begin{subfigure}{.3\linewidth}
\centering
\includegraphics[width=30mm]{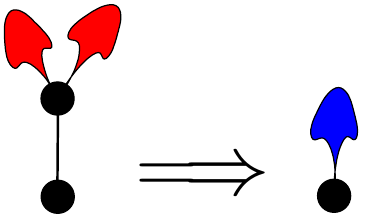}
\caption{Obtaining a blue edge forcer from a red edge forcer.}
\label{r2bl}
\end{subfigure}\hfill
\begin{subfigure}{.3\linewidth}
\centering
\includegraphics[width=50mm]{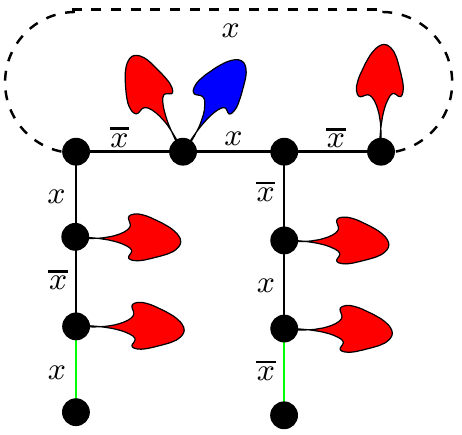}
\caption{The variable gadget.}
\label{var_l}
\end{subfigure}
\begin{subfigure}{.3\linewidth}
\centering
\includegraphics[width=25mm]{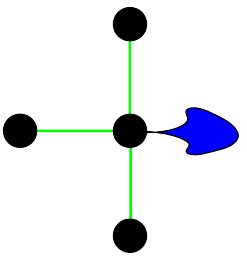}
\caption{The clause gadget.}
\label{clause_l}
\end{subfigure}
\caption{The gadgets for $h'_7$, $h'_8$, and $h'_9$.}
\label{gadget_l}
\end{figure}

\end{proof}

If a planar graph with maximum degree 4 is such that every 4-vertex is a cut vertex,
then its line graph is planar~\cite{Sed63}. We thus have $L(h'_7)\subset h_7$, $L(h'_8)\subset h_8$, and $L(h'_9)\subset h_9$.
So, deciding whether a graph in $h_7$, $h_8$, and $h_9$ is partitionable is NP-complete.

\section{Polynomial classes}\label{sec:poly}
We gather here known graph classes for which deciding whether a member is partitionable
is polynomial time solvable.

\subsection{Chordal graphs}
See~\cite{AFM15}.

\subsection{$P_4$-free graphs}
Abu-Khzam, Feghali, and M\"uller~\cite{AFM15} have shown that we can test in polynomial time
whether a $P_4$-free graph is partitionable.

\subsection{Paw-free graphs}
Olariu~\cite{Ola88} has proved that every connected component of a paw-free graph is either triangle-free or
a complete multipartite graph. Since a complete multipartite graph is $P_4$-free,
we can test in polynomial time whether a paw-free graph is partitionable.

\subsection{$\overline{K_k}$-free graphs}
Recall that the Ramsey number $R(s,t)$ is the least $r$ such that every graph
on $r$ vertices contains an independent set on $s$ vertices or a clique on $t$ vertices.
In a $\overline{K_k}$-free graph, every $K_3$-free induced subgraph thus contains at most $R(k,3)-1$ vertices.
Consider a $\overline{K_k}$-free graph $G$ with $n$ vertices.
For each of the $O\paren{n^{R(k,3)-1}}$ subsets $S$ of at most $R(k,3)-1$ vertices of $G$,
we can test in $O(n^3)$ time whether $S$ induces a $K_3$-free graph and $G\setminus S$ induces a $P_3$-free graph.
So, for any fixed $k$, we can test in polynomial time whether a $\overline{K_k}$-free graph is partitionable.

\section{Partitionable classes}
We gather below known classes of partitionable graphs.
The motivation is the conjecture that for every two of the classes considered in Theorem~\ref{main},
their intersection is a subclass of partitionable graphs. For most pair of classes,
we have identified a class of partitionable graphs containing the intersection,
see Table~\ref{tab:intersection}. The remaining two open cases of the conjecture are denoted by a question mark.

\begin{table}[htbp]
\begin{center}
\begin{tabular}{|c|c|c|c|c|c|c|c|c|}
 \hline
 & $h_1$ & $h_2$ & $h_3$ & $h_4$ & $h_5$ & $h_6$ & $h_7$ & $h_8$\\
 \hline
$h_2$ & $p_2$ & & & & & & & \\
 \hline
$h_3$ & $p_2$ & $p_2$ & & & & & & \\
 \hline
$h_4$ & $p_2$ & $p_2$ & $p_2$ & & & & & \\
 \hline
$h_5$ & $p_3$ & $p_3$ & ? & $p_2$ & & & & \\
 \hline
$h_6$ & $p_2\cap p_3$ & $p_3$ & $p_2$ & $p_2\cap p_3$ & ? & & & \\
 \hline
$h_7$ & $p_3$ & $p_2\cap p_3$ & $p_2$ & $p_2\cap p_3$ & $p_4\cap p_5$ & $p_2\cap p_4$ & & \\
 \hline
$h_8$ & $p_3\cap p_5$ & $p_2\cap p_3$ & $p_2\cap p_5$ & $p_2$ & $p_4$ & $p_2\cap p_4\cap p_5$ & $p_5$ & \\
 \hline
$h_9$ & $p_3\cap p_5$ & $p_2\cap p_3\cap p_5$ & $p_2\cap p_5$ & $p_2\cap p_3$ & $p_4$ & $p_2\cap p_4$ & $p_5$ & $p_5$\\
 \hline
\end{tabular}
\caption{Intersections of the classes considered in Theorem~\ref{main}.}
\label{tab:intersection}
\end{center}
\end{table}

A graph $G$ is \emph{$(d_1,\ldots,d_l)$-colorable} if the vertex set of $G$ can be partitioned
into subsets $V_1,\ldots,V_l$ such that the graph induced by the vertices of $V_i$
has maximum degree at most $d_i$ for every $1\le i\le l$. As it is well known,
for every $a,b\ge0$, every graph with maximum degree $a+b+1$ is $(a,b)$-colorable.

\subsection{$p_1$: $(1,0,0)$-colorable graphs}
Every $(1,0,0)$-colorable graph is partitionable since the color class of degree 1 induces a $P_3$-free graph
and the two color classes of degree 0 induce a bipartite graph, which is $K_3$-free.

\subsection{$p_2$: ($K_4,\overline{C_7}$, odd hole)-free graphs}
This class corresponds to perfect graphs with maximum clique size 3.
By the strong perfect graph theorem, their chromatic number is at most 3.
So, they are $(0,0,0)$-colorable and thus $(1,0,0)$-colorable.
To quickly check that an entry in Table~\ref{tab:intersection} is a subclass of $p_2$, recall
that $K_4^-$, $C_4$, house, and gem are subgraphs of $\overline{C_7}$.

\subsection{$p_3$: ($K_4^-$, house, net)-free graphs}
We define a \emph{big clique} as a maximal clique with at least 3 vertices.
Let $G$ be a ($K_4^-$, house, net)-free graph and let $B$ be a big clique in $G$.
Consider a vertex $x\in G\setminus B$. Then $x$ cannot be adjacent to every vertex in $B$ since $B$ is maximal.
Also, $x$ cannot be adjacent to at least two vertices in $B$ since $G$ is $K_4^-$-free.
So every vertex in $G\setminus B$ is adjacent to at most one vertex in $B$.
Since $G$ is house-free, two vertices in $G\setminus B$ that are adjacent to distinct vertices in $B$
must be non-adjacent.
Since $G$ is net-free, at most two vertices in $B$ are adjacent to a vertex in $G\setminus B$.
For every big clique in $G$, we color blue every vertex of the big clique that has no neighbor outside of the big clique.
The remaining vertices of $G$ are colored red. This gives a partition of $G$ since
the blue part is a disjoint union of cliques and the red part is triangle-free.

To quickly check that an entry in Table~\ref{tab:intersection} is a subclass of $p_3$, recall
that $C_4$ is a subgraph of house and that bull is a subgraph of net.

\subsection{$p_4$: ($K_4^-$, $K_{1,3}$, $K_5$, butterfly)-free graphs}
Let us call $p'_4$ the class of $K_3$-free graphs with maximum degree 4
and such that no edge is incident to two vertices with degree at least 3.
We show that every graph in $p'_4$ is edge-partitionable.
Suppose that $G$ is a counterexample to this statement that is minimal with respect to
the number of edges. 
If $G$ contains an edge $e$ adjacent to at most 2 edges, then we can extend
an edge-partition of $G\setminus e$ by assigning red to $e$
unless every edge adjacent to $e$ is red.
So, by minimality, every edge of $G$ is adjacent to at least 3 edges.
We also have that $G$ is in $p'_4$, so that every edge of $G$ is incident
to a vertex with degree 3 or 4 and a vertex with degree 1 or 2.
Notice that $G$ is a subgraph of some graph considered in Lemma~\ref{42}.

So $G$ is edge-partitionable, and thus every graph in $p'_4$ is edge-partitionable.
Since $p_4=L(p'_4)$, every graph in $p_4$ is partitionable.

To quickly check that an entry in Table~\ref{tab:intersection} is a subclass of $p_4$,
recall that planar graphs are $K_5$-free.

\subsection{$p_5$: planar ($K_{1,3},K_4^-,C_4,\dots,C_{10}$)-free graphs with maximum degree 5}
Let us call $p'_5$ the class of planar graphs with girth at least 11, maximum degree 4,
and such that no edge is incident to two 4-vertices.
We show that every graph in $p'_5$ is edge-partitionable.
Suppose that $G$ is a counterexample to this statement that is minimal with respect to
the number of edges. So $G$ is connected and contains at least 4 edges.

Firstly, $G$ does not contain a vertex $v$ with degree at most 3 that is adjacent
to a 1-vertex $w$. We can extend an edge-partition of $G\setminus\acc{w}$
by assigning red to the edge $vw$, unless $v$ is incident to two red edges in $G\setminus\acc{w}$.

Secondly, $G$ does not contain a 4-vertex $v$ adjacent to at least two 1-vertices $w_1$ and $w_2$.
Since $G$ contains no edge incident to two 4-vertices, the other neighbors $w_3$ and $w_4$
of $v$ have degree at most 3.
If there exists a partition of $G\setminus\acc{w_1,w_2}$ such that $vw_3$ and $vw_4$
have the same color, then we assign blue to $vw_1$ and $vw_2$.
Otherwise, we assume without loss of generality
that $vw_3$ is red and $vw_4$ is forced to be blue.
This means that the other two edges incident to $w_4$ are red.
So we can assign blue to $vw_1$ and red to $vw_2$.

By the previous properties, the graph $G'$ obtained from $G$ by removing every 1-vertex of $G$
has minimum degree 2. It is well known that every planar graph with girth at least $5k+1$
and minimum degree 2 contains a path of $k$ 2-vertices (Lemma 5 in~\cite{NSR97}).
Consequently, $G'$ contains two adjacent 2-vertices $x$ and $y$.
By the previous properties, $x$ and $y$ also have degree 2 in $G$.
Then we can extend an edge-partition of $G\setminus xy$ by assigning red to $xy$.
This contradiction shows that every graph in $p'_5$ is edge-partitionable.

Notice that $p_5\subsetneq L(p'_5)$ since the line graph of a graph in $p'_5$ can be non-planar.
So, every graph in $p_5$ is partitionable.

\subsection{$p_6$: graphs with maximum degree 3}
Whereas the problem is NP-complete for graphs with maximum degree 4, graphs with maximum degree 3 
are partitionable since they are $(1,1)$-colorable graphs and thus $(1,0,0)$-colorable.

\newpage

\end{document}